\def\BibTeX{{\rm B\kern-.05em{\sc i\kern-.025em b}\kern-.08em
    T\kern-.1667em\lower.7ex\hbox{E}\kern-.125emX}}
\newtheorem{theorem}{Theorem}
\newtheorem{proof}{Proof}
\newtheorem{definition}{Definition}
\begin{document}

\title{Entanglement in Higher-Radix Quantum Systems}

\author{\IEEEauthorblockN{Kaitlin N. Smith and Mitchell A. Thornton}
\IEEEauthorblockA{Quantum Informatics Research Group\\
Southern Methodist University, 
Dallas, Texas, USA\\
\{knsmith, mitch\}@smu.edu}
}

\maketitle
\thispagestyle{fancy}
\begin{abstract}
Entanglement is an important phenomenon that enables quantum information processing algorithms and quantum communications protocols. Although entangled quantum states are often described in  radix-2, higher-radix qudits can become entangled as well. In this work, we both introduce partial entanglement, a concept that does not exist for radix-2 quantum systems, and differentiate between partial and maximal entanglement within non-binary quantum information processing systems. We also develop and present higher-radix maximal entanglement generator circuits that are analogous to the well-known Bell state generator for binary quantum systems.   Because higher-dimensioned qudits can be subjected to entangling processes that result in either partially or maximally entangled states, we demonstrate how higher-radix qudit circuits can be composed to generate these varying degrees of partial quantum entanglement.  Theoretical results are provided for the general case where the radix is greater than two, and specific results based on a pair of radix-4 qudits are described.
\end{abstract}

\begin{IEEEkeywords}
qudit, entanglement, partial entanglement, maximal entanglement, quantum information science, QIS, quantum information processing, QIP
\end{IEEEkeywords}

\section{Introduction}

Entanglement is an important aspect in quantum information science (QIS) that is unseen in conventional computing.  For example, if a programmer writes a bit sequence to a register \texttt{reg0}, it is certainly not expected that this write operation would cause the content in another register, such as \texttt{reg1}, to also change.  However, this is precisely the action that would occur in a quantum information processing (QIP) system if \texttt{reg0} and \texttt{reg1} collectively contained a set of qudits in a mutually entangled state.  While the benefits of entanglement may not be readily apparent, it is exploited in numerous algorithms such as Shor's factoring \cite{Sh:94}, Grover's search \cite{Gr:96}, and others in QIS communications systems and protocols like BB84 and its derivatives.  Additionally, entanglement over a long distance was recently demonstrated when the Micius satellite produced an entangled pair of photons in space and transmitted one of the pair to Earth while the entangled state with the remaining photon on the satellite was maintained \cite{YC+:17}.   

The aforementioned QIS algorithms and the Chinese experiment in \cite{YC+:17} were based on the use of quantum bits or ``qubits'' that have a binary basis, typically the so-called ``computational basis,'' of $\{\Ket{0},\Ket{1}\}$.  There has been considerably less research and development on QIS systems based upon higher-dimensional quantum digits, or ``qudits.''  Qudits are simply a generalization of qubits that can be mathematically expressed as a linear combination of an extended computational basis set.  For example, an $R$-dimensional qudit would have a corresponding computational basis set of $\{\Ket{0},\Ket{1}, \cdots, \Ket{r^n-1}\}$ where $R=r^n$, $r>2$, and $(r,n)\in \mathbb{Z}$. 

Entangled states have been studied in the past for radix-3, -4, and -5 QIP systems \cite{enriq_2016}, but the required operators for generating higher-radix entangled states have not been clearly outlined. Additionally, higher-radix QIP demonstrates the unique ability to generate partially entangled states, and achieving these variations of partial entanglement through a set of quantum operators can also be generalized. In this paper, methods for generating higher-radix partial and maximal entanglement will be discussed. A radix-4 system will be used to demonstrate the described techniques.
 
\section{Background QIP Concepts}

\subsection{Information and Operations}
Due to the postulates and axioms of quantum electrodynamic theory, the
quantum state of a wavefunction is modeled as a vector over a discrete Hilbert vector space.  The most
common unit of quantum information is thus represented as a two-dimensional vector in $\mathbb{H}_2$
and is referred to as the qubit.  Qubits are expressed as a linear combination of two basis vectors that span $\mathbb{H}$.  The most common
basis is the so-called computational basis represented as $\Ket{0_2}$ and $\Ket{1_2}$.  We note that
subscripts are used to refer to the value $r$ to avoid confusion while discussing systems of
different values of $r$.  A qubit is expressed as a linear combination of the two orthonormal
basis states,  $\Ket{0_2}=\left[ \begin{matrix}
  1 & 0  
\end{matrix} \right]^{\rm T}$ and $\Ket{1_2}=\left[ \begin{matrix}
  0 & 1  
\end{matrix} \right]^{\rm T}$. In general, a qubit is represented as

\begin{equation}\label{eq:qubit_state}
\Ket{\phi_2} = \alpha\Ket{0_2}+\beta\Ket{1_2}
\end{equation}

\noindent where $\alpha$ and $\beta$ are complex-valued probability amplitudes. For the qubit $\Ket{\phi_2}$, the probability that $\Ket{\phi_2}=\Ket{0_2}$ after measurement is $\alpha^*\alpha= |\alpha|^2$ and the probability that $\Ket{\phi_2}=\Ket{1_2}$ after measurement is $\beta^*\beta= |\beta|^2$ where $|\alpha|^2+|\beta|^2=1$.  

For higher-dimensional systems, the quantum digit is modeled
as a vector in $\mathbb{H}_r$ where $r>2$. The dimension of the overall QIP system is $R=r^n$ where $n$ indicates the amount of included qudits.   A radix-$r$ qudit is a linear combination of $r$ basis states in the form of 

\begin{equation}\label{eq:gen_qudit}
\Ket{\phi_r} = \sum_{i=0}^{r-1} a_i \Ket{i_r}.
\end{equation}

\noindent Using Eqn.~\ref{eq:gen_qudit} to generate a radix-4 qudit results in

\begin{equation}\label{eq:qudit_state}
\Ket{\phi_4} = a_0\Ket{0_4}+ a_1\Ket{1_4} + a_2\Ket{2_4}+ a_3\Ket{3_4} .
\end{equation}

\noindent In Eqn.~\ref{eq:qudit_state}, the basis states used are $\Ket{0_4}=\left[ \begin{matrix}
  1 & 0 & 0 & 0 
\end{matrix} \right]^{\rm T}$, $\Ket{1_4}=\left[ \begin{matrix}
  0 & 1 & 0 & 0 
\end{matrix} \right]^{\rm T}$, $\Ket{2_4}=\left[ \begin{matrix}
  0 & 0 & 1 & 0 
\end{matrix} \right]^{\rm T}$, and $\Ket{3_4}=\left[ \begin{matrix}
  0 & 0 & 0 & 1 
\end{matrix} \right]^{\rm T}$, and $a_0$, $a_1$, $a_2$, and $a_3$ are probability amplitudes. The technological reasons prohibiting the widespread adoption of higher-radix classical computation are not present in the realization of QIP systems. Electronic information processing systems utilize the binary digit, or ``bit,'' as a basic unit of information, and a bit is typically represented by a voltage range.  Due to the rapid advances in transistor scaling versus allowable voltage levels, the bit has dominated as the preferred choice in classical computing.  Within QIS, there are no clear preferences with regard to the choice of information carriers.  
While many current QIP implementations are binary, or based upon qubits, higher-dimensional systems have certainly not been ruled out. Some examples of qudit-based photonic QIP systems that are found in the literature encode information with orbital angular momentum (OAM), time-energy, frequency, time-phase, and location.

For any radix-$r$ quantum system, computation is accomplished with quantum operations. These operators, also referred to as gates, are each represented by a unitary transfer function matrix, $\mathbf{U}$. The matrix $\mathbf{U}$ is of size $r^n\times r^n$ where $n$ indicates the number of radix-$r$ qudits being transformed.

\subsection{Superposition}
A qubit or qudit can comprise non-zero probability amplitudes for all of its basis states simultaneously. This characteristic is known as quantum ``superposition.'' Superposition allows quantum computing algorithms to parallelize computations since a state holds multiple values at once. In quantum systems, parallelized computation is achieved through
parallelism in information representation rather than the spatial or temporal (\textit{i.e.} pipelining) parallelism employed in classical systems.

\begin{definition} \textit{Maximal Superposition} \\ \label{def:max-super}
\noindent A state is is said to be ``maximally superimposed''  when the state vector can be expressed as a linear combination of all basis vectors such that each probability amplitude has an identical magnitude that is equal to $\frac{1}{r}$. In other words, the quantum state takes the value of $ \Ket{\phi_r} = \sum_{i=0}^{r-1} a_i  \Ket{i_r}$ where each $|a_i|=a_i^*a_i=a_i a_i^*=\frac{1}{r}$.\hfill $\Box$

\end{definition}

\subsection{Entanglement}
When two or more quantum entities are entangled, they interact and behave as a single composite system. Operations and measurements performed on one portion of the entangled group directly impacts the state of the other entangled elements. Entanglement is a significant phenomenon as it powers quantum communication techniques. For example, entanglement enables quantum teleportation, a highly secure communication protocol where quantum states can be obtained in a near simultaneous manner as demonstrated in~\cite{YC+:17}.

\section{Entanglement Generation}
Two quantum units of any radix can only become entangled if the proper operators are applied. First, the pair is initialized into a basis state.  Next, one of the units is transformed into a state of maximal superposition. Finally, the pair must interact with each other using one or more two-input operators. 

\subsection{The Chrestenson Gate}

The Hadamard gate is represented with the transfer matrix

\begin{equation}\label{eq:Hadamard}
{\scriptsize
\mathbf{H}=\frac{1}{\sqrt{2}}
\arraycolsep=1.4pt\def\arraystretch{1.4}
\begin{bmatrix*}[r]
    1 & 1 \\[-4pt]
    1 & -1
\end{bmatrix*}}.
\end{equation}

\noindent This gate is an important operator used to evolve a qubit into a maximally superimposed state. When a qubit is initially in a basis state, $\mathbf{H}$ transforms the qubit so that it has an equal probability of being measured as either $\Ket{0_2}$ or $\Ket{1_2}$. Quantum operators exist for many different computation bases, such as radix-3 and above, that achieve equal, and therefore maximal, superposition among the corresponding basis states. These operators are derived using the discrete Fourier transform on Abelian groups. General theory of Fourier transforms on Abelian groups is outlined in the literature \cite{vilenkin, b6}. The multiple-valued generalization of the radix-2 Hadamard gate and its transfer matrix is composed of discretized versions of the orthogonal Chrestenson basis function set \cite{b6}. This QIP gate is generally referred to as the Chrestenson gate \cite{b7}. Examples of useful applications of the Chrestenson transform in QIP can be found in reference \cite{Chrest_gate_examples}.

The Chrestenson operator, as the generalized version of the Hadamard operator, can be defined by a matrix that is parameterized depending upon the on the radix of computation. The resulting radix-$r$ Chrestenson transformation matrix for a single qudit has a size of $r \times r$, with column (or row) vectors that are orthonormal to one another since the matrix is unitary. Traditionally, the Chrestenson transformation matrix is expressed as being normalized with a scalar factor, $\frac{1}{\sqrt{r}}$, thus permitting each of the components within the Chrestenson transform matrix to take the form of one of the $r^{th}$ roots of unity raised to some integral power \cite{b6,b7}. The $r^{th}$ roots of unity can be visualized as $r$ points that are evenly-spaced on the unit circle in the complex plane with one of the roots always being the real-valued unity value or +1. The roots of unity are indicated as $w_k$ where $k=0, 1, ..., (r-1)$, and the point (1,0), denoted as $w_0$, is always included in this set.  Each root of unity satisfies $(w_k)^r=1$. A closed-form representation of the $r^{th}$ roots of unity is $w_k= e^{i\frac{2\pi}{r}\times k}$.

The structure of the Chrestenson transform matrix takes the form of a Vandermonde matrix where each row vector consists of a $r^{th}$ root of unity, $w_k$, raised to an integral power $j$. Each element of the matrix is $w_k^j$ where $j$ is the column index and $k$ is the row index. In this indexing system, $j=0$ represents the leftmost column vector and $j=(r-1)$ represents the rightmost column vector. Similarly, $k=0$ represents the topmost row vector and $k=(r-1)$ represents the bottommost row vector. The Hadamard matrix results from the Chrestenson transform matrix when $r = 2$, confirming that the Chrestenson transform is a generalization of the Hadamard transform. The generalized radix-$r$ Chrestenson transform matrix, $\mathbf{C}_r$, is represented with the matrix

\begin{equation}\label{eq:gen_Chrestenson}
{\scriptsize
\mathbf{C}_r =
\frac{1}{\sqrt{r}}
\arraycolsep=1.4pt\def\arraystretch{1.6}
\begin{bmatrix*}[c]
    w^{0}_{0} & w^{1}_{0} & \dots & w^{(r-1)}_{0} \\[-2pt]
    w^{0}_{1} & w^{1}_{1} &  \dots & w^{(r-1)}_{1} \\[-2pt]
    \vdots & \vdots & \ddots & \vdots  \\[-2pt]
    w^{0}_{(r-1)} & w^{1}_{(r-1)}  & \dots & w^{(r-1)}_{(r-1)}   
\end{bmatrix*}}.
\end{equation}

\noindent Using the fourth roots of unity, $w_0 = \exp[(i2\pi/4)*0] = 1$, $w_1 =\exp[(i2\pi/4)*1] = i$, $w_2 = \exp[(i2\pi/4)*2] = -1,$ and $w_3 = \exp[(i2\pi/4)*3] = -i$, in Eq.~\ref{eq:gen_Chrestenson}, the radix-4 Chrestenson gate transfer matrix becomes 

\begin{equation}\label{eq:rad-4_Chrestenson}
{\scriptsize
\mathbf{C_4}=
\frac{1}{\sqrt{4}}
\arraycolsep=1.4pt\def\arraystretch{1.2}
\begin{bmatrix*}[r]
    1 & 1 & 1 & 1 \\[-2pt]
    1 & i & -1 & -i \\[-2pt]
    1 & -1 & 1 & -1\\[-2pt]
    1 & -i & -1 & i 
\end{bmatrix*}}. 
\end{equation}

The radix-4 Chrestenson gate ($\mathbf{C}_4$), allows a radix-4 qudit originally in a basis to evolve into a quantum state of equal superposition. More information about the radix-4 Chrestenson gate as well as a proposed physical implementation can be found in \cite{rad4chrest}. The following example shows how the radix-4 qudit $\Ket{a_4} = \Ket{0_4}$ evolves to $\Ket{b_4} = \frac{1}{2}\Ket{0_4} + \frac{1}{2}\Ket{1_4} + \frac{1}{2}\Ket{2_4} + \frac{1}{2}\Ket{3_4}$, taking the value of the first column of the radix-4 Chrestenson matrix, after it is applied to the $\mathbf{C_4}$ transform

\[
\mathbf{C}_4\Ket{a_4} = \ket{b_4},
\]

\[ 
{\scriptsize
\mathbf{C}_4\Ket{0_4} =
\frac{1}{\sqrt{4}}  
\arraycolsep=1.4pt\def\arraystretch{1.2}
\begin{bmatrix*}[r]
    1 & 1 & 1 & 1\\
    1 & i & -1 & -i \\
    1 & -1 & 1 & -1\\
    1 & -i & -1 & i \\
\end{bmatrix*}\begin{bmatrix}
    1 \\
    0 \\
    0 \\
    0 \\
\end{bmatrix} =\frac{1}{2}\begin{bmatrix}
    1 \\
    1 \\
    1 \\
    1 \\
\end{bmatrix}},
\]
\[
{\scriptsize
\mathbf{C}_4\Ket{0_4} = \frac{1}{2}[\Ket{0_4} + \Ket{1_4} + \Ket{2_4} + \Ket{3_4}].
}
\]

\noindent The schematic symbol of the $\mathbf{C_4}$ gate is pictured in Fig.~\ref{fig:C4}.

\begin{figure}[!t]
\includegraphics[height=.45in]{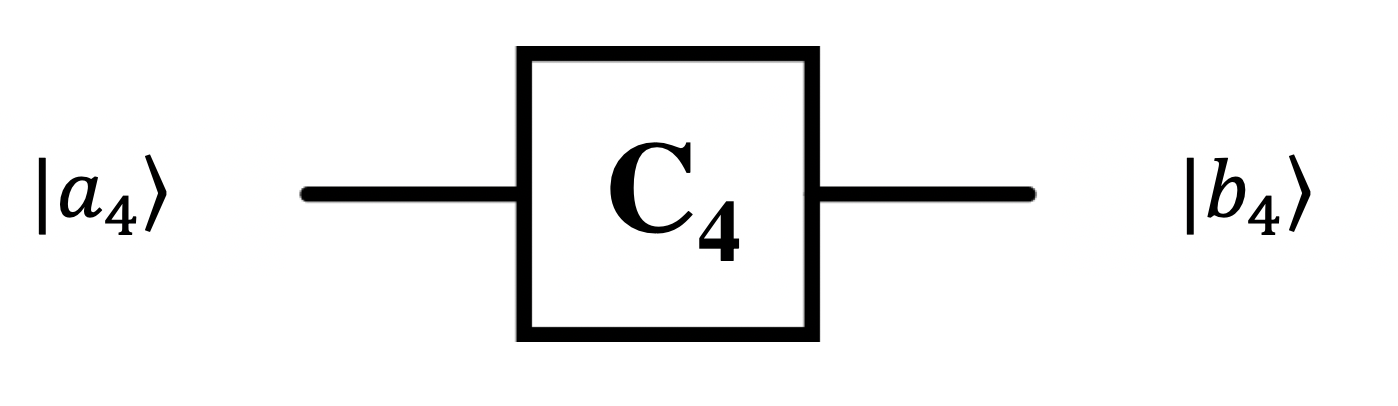}
\centering
\caption{Symbol of the radix-4 Chrestenson gate, $\mathbf{C_4}$.}
\label{fig:C4}
\end{figure}

\subsection{The Controlled Modulo-Add Gate }
The radix-2 $\mathbf{X}$ operator, or the $\mathbf{NOT}$ operator, performs a Pauli-$\mathbf{X}$ rotation on a qubit. Mathematically, the Pauli-$\mathbf{X}$ can be considered a modulo-2 addition-by-one operator
since it evolves a qubit $\Ket{0_2}$ to be $\Ket{((0+1) \text{mod} \; 2)_2}=\Ket{1_2}$ and likewise evolves a qubit
$\Ket{1_2}$ to $\Ket{((1+1) \text{mod} \; 2)_2}=\Ket{0_2}$.  In the case where $\Ket{\phi_2}$ is in a state
of superposition, $\Ket{\phi_2(t_0)}=a_0 \Ket{0_2} + a_1 \Ket{1_2}$, the $\mathbf{X}$ operation exchanges the probability amplitude coefficients of the quantum state yielding the qubit $\Ket{\phi_2(t_1)}=a_1\Ket{0_2}+a_0\Ket{1_2}$.
The quantum gate or operator for the Pauli-$\mathbf{X}$ is represented with the transfer matrix

\begin{equation}\label{eq:NOT}
{\scriptsize
\mathbf{X} = \arraycolsep=1.2pt\def\arraystretch{1.4}
\begin{bmatrix*}[r]
    0 & 1 \\[-5pt]
    1 & 0
\end{bmatrix*}}.
\end{equation}

\noindent The controlled version of the $\mathbf{X}$ gate is the ``controlled-$\mathbf{X}$'' or
``controlled-$\mathbf{NOT}$'' gate denoted as $\mathbf{C}_{NOT}$.  The controlled-$\mathbf{NOT}$ gate may also be referred to by the somewhat unconventional
name of ``controlled-modulo-add by one'' gate. The $\mathbf{C}_{NOT}$ gate is defined as

\begin{equation}\label{eq:CNOT}
{\scriptsize
\mathbf{C}_{NOT} = 
\arraycolsep=1.2pt\def\arraystretch{1.4}
\begin{bmatrix*}[r]
    1& 0 & 0 & 0 \\[-5pt]
    0 & 1 & 0 & 0 \\[-5pt] 
    0& 0 & 0 & 1 \\[-5pt]
    0 & 0 & 1 & 0
\end{bmatrix*}},
\end{equation}

\noindent and it causes a target qubit to undergo a Pauli-$\mathbf{X}$ operation if the control qubit has a probability amplitude for $\Ket{1_2}$. 

In the case of radix-2 qubit systems, only two different modulo-$k$ additions are possible, modulo-0 and modulo-1, since there
are only two computational basis vectors.
Furthermore, one of these
is the trivial case of modulo-2 addition-by-zero that results in the identity transfer matrix. 

We denote the single qudit modulo-addition operations as $\mathbf{M}_k$
for operators that cause a modulo-$k$ addition with respect to modulus-$r$ as is used in \cite{qlog_unit_ops}.  As is the case with qubits (\textit{i.e.}, $r=2$), the modulo-0 operation 
is equal to the identity transfer function, or $\mathbf{M}_0=\mathbf{I}_r$ where $\mathbf{I}_r$ is the $r \times r$ identity matrix. 
As an example, the non-trivial radix-4 $\mathbf{M}_k$ operators are

\mbox{ \small
\begin{math}
\begin{array}{ccc}
\arraycolsep=0.2pt\def\arraystretch{1.0}
\begin{array}{ccc}
\mathbf{M}_{1} & = &  
\left[ 
\arraycolsep=1.4pt\def\arraystretch{1.0}
\begin{array}{cccc}
    0 & 0 & 0 & 1 \\[-2pt]
    1 & 0 & 0 & 0 \\[-2pt]
    0 & 1 & 0 &0 \\[-2pt]
    0 & 0 & 1 & 0 
\end{array} \right] ,
\end{array}
&
\arraycolsep=0.2pt\def\arraystretch{1.0}
\begin{array}{ccc}
\mathbf{M}_{2} & = & 
\left[ 
\arraycolsep=1.4pt\def\arraystretch{1.0}
\begin{array}{cccc}
    0& 0 & 1 & 0 \\[-2pt]
    0& 0 & 0 & 1 \\[-2pt] 
    1& 0 & 0 &0 \\[-2pt]
    0 & 1 & 0 & 0 
\end{array} \right],
\end{array}
&
\arraycolsep=0.2pt\def\arraystretch{1.0}
\begin{array}{ccc}
\mathbf{M}_{3} & = & 
\left[ 
\arraycolsep=1.4pt\def\arraystretch{1.0}
\begin{array}{cccc}
    0& 1 & 0 & 0 \\[-2pt]
    0 & 0 & 1 & 0 \\[-2pt] 
    0& 0 & 0 &1 \\[-2pt]
    1 & 0 & 0 & 0 
\end{array} \right].
\end{array}
\end{array}
\end{math}}

For higher-dimensional systems with radix-$r$, $r>2$, there are $r-1$ different 
non-trivial single-qudit modulo-$k$ additions.   
Considering all available control values for radix-$r$,  as well as the different modulo-$k$ additions, there are a total 
of $r^2-r$ different and non-trivial controlled-modulo-addition operators. 
The radix-$r$ controlled Modulo-addition-$k$ matrix, $\mathbf{A}_{h,k}$ where $h$ and $k$ each 
contain a single value, takes the form of

\begin{equation} \label{eq:gen-cont-mod-add} 
{\scriptsize
\arraycolsep=1.4pt\def\arraystretch{1.0}
\begin{array}{c}
\mathbf{A}_{h,k}=
\left[
\arraycolsep=1.4pt\def\arraystretch{1.0}
\begin{array}{ccccccc}
\mathbf{D}_0 & \mathbf{0}_r & \cdots & \cdots & \cdots & \cdots & \mathbf{0}_r \\
\mathbf{0}_r & \mathbf{D}_1 & \mathbf{0}_r & \cdots & \cdots& \cdots &  \mathbf{0}_r  \\
\vdots  &  \mathbf{0}_r  &  \ddots & \mathbf{0}_r &  \cdots &  \cdots & \mathbf{0}_r \\
\vdots & \vdots &  \mathbf{0}_r & \mathbf{D}_j & \mathbf{0}_r & \cdots & \mathbf{0}_r \\
\vdots & \vdots &  \vdots & \mathbf{0}_r &  \ddots & \mathbf{0}_r & \vdots \\
\vdots & \vdots &  \vdots & \vdots & \mathbf{0}_r &  \ddots & \mathbf{0}_r \\
\mathbf{0}_r & \mathbf{0}_r &  \mathbf{0}_r & \mathbf{0}_r & \cdots & \mathbf{0}_r & \mathbf{D}_{(r-1)}
\end{array}
\right], 
\\
\text{where,   } \mathbf{D}_i = \begin{cases}
\mathbf{M}_0 = \mathbf{I}_r, & i \neq h \\
\mathbf{M}_k, & i=h.
\end{cases}
\end{array}}
\end{equation}
 
\noindent In Eqn.~\ref{eq:gen-cont-mod-add}, each submatrix along the diagonal is denoted as $\mathbf{D}_i$ and is of dimension $r \times r$. The two-qudit controlled variation of the modulo-add gate, $\mathbf{A}_{h,k}$, only allows the modulo-addition by $k$ operation to occur on the target whenever the control qudit is in state, $\Ket{h_r}$. For example,

\begin{equation}\label{eq:r4_A_3,1}
{\scriptsize
\begin{array}{c}
\mathbf{A}_{3,1}=
\left[
\arraycolsep=1.4pt\def\arraystretch{1.0}
\begin{array}{rrrr;{2pt/2pt}rrrr;{2pt/2pt}rrrr;{2pt/2pt}rrrr}
    1& 0 & 0 & 0 & 0 &0 &0 &0 &0 & 0 & 0 &0 &0 &0 &0 &0\\
    0& 1 & 0 & 0 & 0 &0 &0 &0 &0 & 0 & 0 &0 &0 &0 &0 &0\\
    0& 0 & 1 & 0 & 0 &0 &0 &0 &0 & 0 & 0 &0 &0 &0 &0 &0\\
    0& 0 & 0 & 1 & 0 &0 &0 &0 &0 & 0 & 0 &0 &0 &0 &0 &0\\\hdashline[2pt/2pt]
    0& 0 & 0 & 0 & 1 &0 &0 &0 &0 & 0 & 0 &0 &0 &0 &0 &0\\
    0& 0 & 0 & 0 & 0 &1&0 &0 &0 & 0 & 0 &0 &0 &0 &0 &0\\
    0& 0 & 0& 0 & 0 &0 &1 &0 &0 & 0 & 0 &0 &0 &0 &0 &0\\
    0& 0 & 0 & 0& 0 &0 &0 &1 &0 & 0 & 0 &0 &0 &0 &0 &0\\\hdashline[2pt/2pt]
    0& 0 & 0 & 0 & 0 &0 &0 &0 &1 & 0 & 0 &0 &0 &0 &0 &0\\
    0& 0 & 0 & 0 & 0 &0 &0 &0 &0 & 1 & 0 &0 &0 &0 &0 &0\\
    0& 0 & 0& 0 & 0 &0 &0 &0 &0 & 0 & 1 &0 &0 &0 &0 &0\\
    0& 0 & 0 & 0 & 0 &0 &0 &0 &0 & 0 & 0 &1 &0 &0 &0 &0\\\hdashline[2pt/2pt]
    0& 0 & 0 & 0 & 0 &0 &0 &0 &0 & 0 & 0 &0 &0 &0 &0 &1\\
    0& 0 & 0 & 0 & 0 &0 &0 &0 &0 & 0 & 0 &0 &1 &0 &0 &0\\
    0& 0 & 0 & 0 & 0 &0 &0 &0 &0 & 0 & 0 &0 &0 &1 &0 &0\\
    0& 0 & 0 & 0& 0 &0 &0 &0 &0 & 0 & 0 &0 &0 &0 &1 &0\\
\end{array}\right]
\end{array}
=
\left[
\arraycolsep=1.4pt\def\arraystretch{1.0}
\begin{array}{cccc}
\mathbf{D}_0 & \mathbf{0}_4 &  \mathbf{0}_4 & \mathbf{0}_4 \\
\mathbf{0}_4 & \mathbf{D}_1 & \mathbf{0}_4 & \mathbf{0}_4  \\
\mathbf{0}_4  &  \mathbf{0}_4  &  \mathbf{D}_2 & \mathbf{0}_4 \\
\mathbf{0}_4 & \mathbf{0}_4 &  \mathbf{0}_4 & \mathbf{D}_3
\end{array}
\right]
}
\end{equation}

\noindent only allows the $\mathbf{D}_3=\mathbf{M}_1$ operation to execute on the target qudit if the control qudit has a value of $\Ket{3_4}$, or at least a non-zero probability amplitude for $\Ket{3_4}$. The control qudit probability amplitudes for $\Ket{0_4}$, $\Ket{1_4}$, and $\Ket{2_4}$ do not evolve the target because $\mathbf{D}_0=\mathbf{D}_1=\mathbf{D}_2=\mathbf{M}_0 = \mathbf{I}_4$.  In Eqn.~\ref{eq:r4_A_3,1}, the dashed lines separate submatrices so the $\mathbf{D}_i$ values are apparent. The general symbol of the controlled Modulo-add gate is shown in Fig.~\ref{fig:mod-add}.  The total amount of available controlled-modulo add operations, $\mathbf{A}_{h,k}$, varies depending on the radix. There are $r$ possible control values, $h$, in the range $\{0,\cdots,(r-1)\}$, and $r-1$ meaningful values in the range $\{1, \cdots, (r-1)\}$ for the $k$ value in the modulo-addition by $k$ operation. As an example, in a radix-4 QIP system, the controlled-modulo add operations are $\mathbf{A}_{0,1}$, $\mathbf{A}_{0,2}$, $\mathbf{A}_{0,3}$, $\mathbf{A}_{1,1}$, $\mathbf{A}_{1,2}$, $\mathbf{A}_{1,3}$, $\mathbf{A}_{2,1}$, $\mathbf{A}_{2,2}$, $\mathbf{A}_{2,3}$, $\mathbf{A}_{3,1}$, $\mathbf{A}_{3,2}$, and $\mathbf{A}_{3,3}$.  

\begin{figure}[!t]
\includegraphics[height=0.55in]{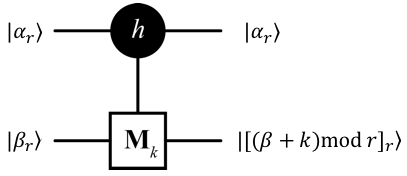}
\centering
\caption{Symbol of the controlled Modulo-add gate, $\mathbf{A}_{h,k}$.}
\label{fig:mod-add}
\end{figure}

\subsection{The Bell State Generator}
The Bell state generator, the inspiration for this work, entangles qubit pairs. An entangled qubit pair, $\Ket{\alpha \beta_2}$ is characterized as a single QIS system wherein the quantum state formed from the qubit pair cannot be factored into a product of $\Ket{\alpha_2} \otimes\Ket{\beta_2}$. 

\begin{definition} \textit{Entangled Qubit Pair} \label{def:qubit-entang} \\
$\Ket{\alpha \beta_2}$ is considered entangled when it consists of two unfactorable basis states with equal and non-zero probability amplitudes. All other basis states in the state vector for the pair have zero-valued probability amplitudes. This leads to four possible entangled two-qubit pairs that are commonly referred to as the ``Bell states.'' The Bell states are 

\begin{equation}\label{eq:Bell}
{\small
\begin{split}
\begin{array}{rrr}
\Ket{B_{00}} = \Ket{\Phi^+} = \frac{\Ket{00_2} + \Ket{11_2}}{\sqrt{2}},&
\mbox{} &
\Ket{B_{01}} = \Ket{\Psi^+} = \frac{\Ket{01_2} + \Ket{10_2}}{\sqrt{2}},
\end{array}\\
\begin{array}{rrr}
\Ket{B_{10}} = \Ket{\Phi^-} = \frac{\Ket{00_2} - \Ket{11_2}}{\sqrt{2}},&
\mbox{} &
\Ket{B_{11}} = \Ket{\Psi^-} = \frac{\Ket{01_2} - \Ket{10_2}}{\sqrt{2}}.
\end{array}
\end{split}
}
\end{equation}
\end{definition}

\noindent The Bell states are created when a pair of qubits, each initialized to a basis state, are operated upon by a quantum algorithm referred to as a Bell state generator. The algorithm may be considered a quantum logic circuit or as a program on a quantum computer, but in either case, it is denoted by the sequence comprised of a Hadamard and $\mathbf{C}_{NOT}$ gate as shown in Fig.~\ref{fig:bell_state_gen}.

\begin{figure}[h!]
\includegraphics [height=0.55in]{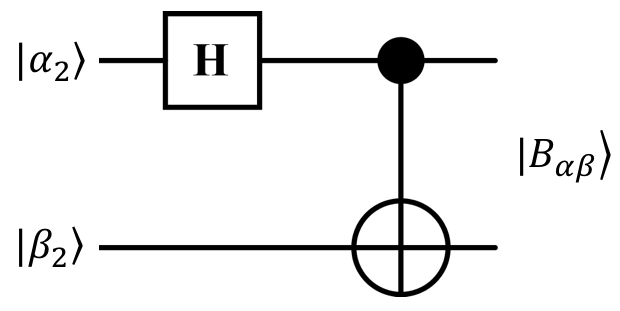}
\centering
\caption{Bell State Generator.}
\label{fig:bell_state_gen}
\end{figure}

As a note, qubit pair entanglement is not limited to just the Bell states. Other arbitrary entangled pairs are possible. These qubit pairs are still entangled, but they contain non-equal magnitudes with respect to the non-zero probability amplitudes in the quantum state vector.  

\section{Demonstration of Partial Entanglement}
Qudits are maximally entangled whenever each of the possible basis state outcomes of an observation are equally likely. Additionally, measurement of one part of the entangled group causes others within the entangled group to collapse without direct observation. In contrast to maximal entanglement, a partially entangled state contains a subset of entangled basis states along with a remaining set that is not entangled. States of partial entanglement should not be confused with non-maximal entanglement where probability amplitudes are imbalanced.  Partial entanglement does not exist in binary QIP systems. The concept of partial entanglement is unique for QIP where $r > 2$. 

\begin{definition} \textit{Partial Entanglement} \label{def:part-ent} \\
Consider a general radix = $r$ qudit expressed in terms of the the computational bases as
$\Ket{\phi_r}=a_0\Ket{0}+a_1\Ket{1}+ \cdots + a_i\Ket{i}+ \cdots + a_{r-1}\Ket{r-1}$.  When
$\Ket{\phi_r}$ is entangled with another radix-$r$ qudit, $\Ket{\theta_r}$, the entangled quantum state
can be expressed as $\Ket{\phi \theta_r}$. If the probability amplitudes of the entangled quantum state
are such that a measurement of $\Ket{\phi_r}$ in a given state, $\Ket{i}$,  implies that $\Ket{\theta_r}$ is
correspondingly in some other given state, $\Ket{j}$ then it is verified that $\Ket{\phi \theta_r}$ are entangled.
However, if the measurement of $\Ket{\phi_r}$ results in an observation of another distinct state, $\Ket{k}$
and furthermore, this measurement does not imply that $\Ket{\theta_r}$ is correspondingly in some
distinct state $\Ket{h}$, then the pair $\Ket{\phi \theta_r}$ are said to be partially entangled. \hfill $\Box$
\end{definition}

Using the structure of the Bell state generator as inspiration, a higher-radix partial entanglement generator can be created with a Chrestenson operator followed by a controlled-modulo-addition operation. An example circuit for radix-4 is seen in Fig.~\ref{fig:part-ent-gen}a. Here, the quantum state $\Ket{00_4}$ is transformed via the partial entanglement generator and evolves to

\begin{equation} \label{eq:T_5a}
\scriptsize{
\begin{split}
\mathbf{T}_{Fig.~\ref{fig:part-ent-gen}a}\Ket{\phi \theta_4}&=\mathbf{T}_{Fig.~\ref{fig:part-ent-gen}a}\Ket{00_4} \\
&=\mathbf{A}_{3,1}\times (\mathbf{C}_4 \otimes \mathbf{I}_4)\Ket{00_4} \\
&=\frac{1}{2}\left[\Ket{00_4} + \Ket{10_4} + \Ket{20_4} + \Ket{31_4} \right] \\
&= \frac{1}{2}\left[ \left(\Ket{0_4} + \Ket{1_4} + \Ket{2_4}\right) \otimes\Ket{0_4}\right] +  \frac{1}{2}\left[ \Ket{31_4} \right]. 
\end{split}}
\end{equation}

\noindent The result of Eqn.~\ref{eq:T_5a} exhibits some unique characteristics. It is observed that the qudit pair is entangled with respect to state 
$\Ket{31_4}$ since the measurement of $\Ket{\theta_4}$ resulting in $\Ket{\theta_4}=\Ket{1_4}$ implies that $\Ket{\phi_4}=\Ket{3_4}$.  However, if 
$\Ket{\theta_4}$ is measured and results in $\Ket{\theta_4}=\Ket{0_4}$, there remains an equally likely chance that $\Ket{\phi_4}$ has assumed
a basis value of $\Ket{0_4}$, $\Ket{1_4}$, or $\Ket{2_4}$.  This is an example of a partially entangled pair. The output states for all possible input combinations of Fig.~\ref{fig:part-ent-gen}a are found in the center column of the leftmost partial entanglement results in Table~\ref{tb:par_full_ent_data}.

\begin{figure}[!t]
\includegraphics[height=0.75in]{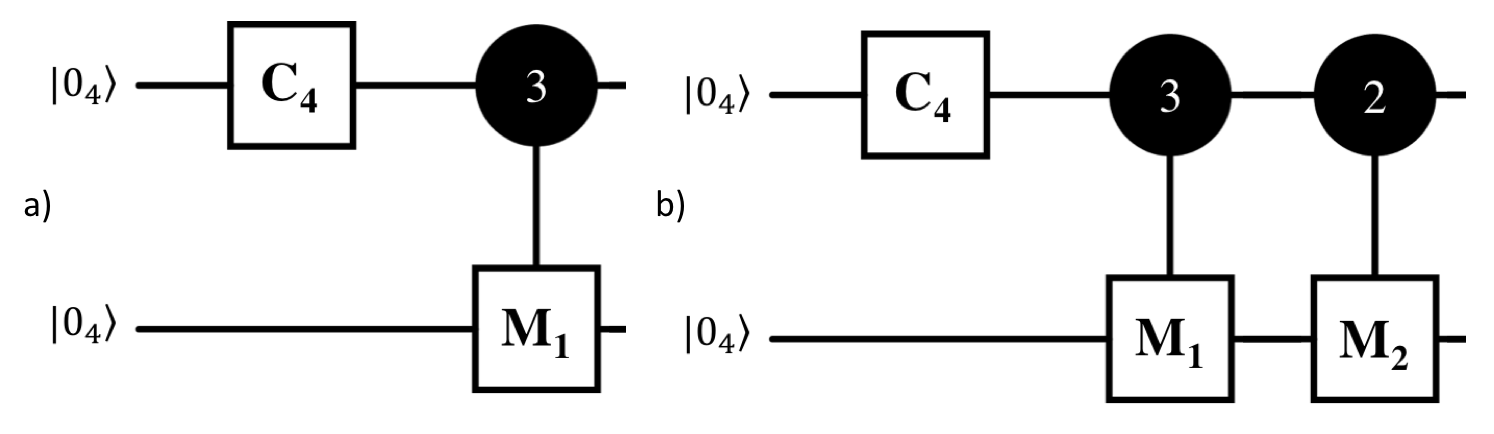}
\centering
\caption{Example radix-4 partial entanglement generators}
\label{fig:part-ent-gen}
\end{figure}

The radix-4 partial entanglement generator of Fig.~\ref{fig:part-ent-gen}a can be expanded to create Fig.~\ref{fig:part-ent-gen}b. This second circuit includes an additional $\mathbf{A}_{h,k}$ operator, and a different type of partially entangled radix-4 state results:

\begin{equation} \label{eq:T_5b}
\scriptsize{
\begin{split}
\mathbf{T}_{Fig.~\ref{fig:part-ent-gen}b}\Ket{00_4}&=\mathbf{A}_{2,2}\times\mathbf{A}_{3,1}\times (\mathbf{C}_4 \otimes \mathbf{I}_4)\Ket{00_4} \\
&=\mathbf{A}_{(2,3),(2,1)}\times (\mathbf{C}_4 \otimes \mathbf{I}_4)\Ket{00_4} \\
&=\frac{1}{2}\left[\Ket{00_4} + \Ket{10_4} + \Ket{22_4} + \Ket{31_4} \right] \\
&= \frac{1}{2}\left[ \left(\Ket{0_4} + \Ket{1_4}\right) \otimes\Ket{0_4}\right]  + \frac{1}{2}\left[\Ket{22_4}+ \Ket{31_4} \right]. 
\end{split}}
\end{equation}

\noindent In Eqn.~\ref{eq:T_5b}, the generated state shows an increase in the number of unique, fully-entangled basis state values and a decrease in the number of terms that have a factorable form and are indicative of the presence of partial entanglement.  The output states for all possible input combinations of Fig.~\ref{fig:part-ent-gen}b are found in the third column of leftmost set of partial entanglement results in Table~\ref{tb:par_full_ent_data}.

When the radix-4 QIP system contains fewer than $r-1 = 4-1 = 3$ $\mathbf{A}_{h,k}$ gates, partially entangled states are generated. The degree of partial entanglement, or the number of basis states with a common factor, is directly related to the total number of $\mathbf{A}_{h,k}$ operations in the partial entanglement circuit. When including multiple $\mathbf{A}_{h,k}$ gates in a partial-entanglement circuit, each value of $h$ and $k$ must differ.

Although partially entangled states may not become as commonly implemented as fully entangled states in higher-radix systems, that does not mean they are simply an interesting phenomena without a use case. Possible implementations could involve generating either entangled states or random values in a controlled manner for higher-radix QIP communication systems. For instance, observing one of the entangled basis states results in known second qudit's value, but observing an unentangled qudit that is factorable leaves an even distribution with respect to the possible observed state of the second qudit. In addition, since there are many combinations of partial entanglement generators, the output quantum state can be customized for a particular application.

\begin{center}
\begin{table*}[!htbp]
\renewcommand\arraystretch{1.2}
\caption{Output of Radix-4 Partial and Full Entanglement Generator Circuits in Figs.~\ref{fig:part-ent-gen} and \ref{fig:full-ent-gen} }
\begin{tabular}{lr}
{\scriptsize
\hfill{}
\begin{tabular}{|c|c|c|}
\hline
& \multicolumn{2}{|c|}{\textbf{Two-Qudit Operation(s) in Partial Entanglement Generator }}\\
\cline{2-3}
\hline
 \textbf{Input}&$\mathbf{A}_{3,1}$ & $\mathbf{A}_{2,2}\times\mathbf{A}_{3,1}=\mathbf{A}_{(2,3),(2,1)}$  \\
 \hline
$\Ket{00_4}$ &$ \frac{1}{2}\left[\Ket{00_4} + \Ket{10_4} + \Ket{20_4} + \Ket{31_4} \right]$ &$ \frac{1}{2}\left[\Ket{00_4} + \Ket{10_4} + \Ket{22_4} + \Ket{31_4} \right]$\\
$\Ket{01_4}$ &$ \frac{1}{2}\left[\Ket{01_4} + \Ket{11_4} + \Ket{21_4} + \Ket{32_4} \right]$& $ \frac{1}{2}\left[\Ket{01_4} + \Ket{11_4} + \Ket{23_4} + \Ket{32_4} \right]$ \\
$\Ket{02_4}$ &$ \frac{1}{2}\left[\Ket{02_4} + \Ket{12_4} + \Ket{22_4} + \Ket{33_4} \right]$& $ \frac{1}{2}\left[\Ket{02_4} + \Ket{12_4} + \Ket{20_4} + \Ket{33_4} \right]$ \\
$\Ket{03_4}$ &$ \frac{1}{2}\left[\Ket{03_4} + \Ket{13_4} + \Ket{23_4} + \Ket{30_4} \right]$ &$ \frac{1}{2}\left[\Ket{03_4} + \Ket{13_4} + \Ket{21_4} + \Ket{30_4} \right]$\\
$\Ket{10_4}$ & $ \frac{1}{2}\left[\Ket{00_4} + i\Ket{10_4} - \Ket{20_4} -i \Ket{31_4} \right]$ & $ \frac{1}{2}\left[\Ket{00_4} + i\Ket{10_4} -\Ket{22_4} - i\Ket{31_4} \right]$\\
$\Ket{11_4}$ &$ \frac{1}{2}\left[\Ket{01_4} + i\Ket{11_4} - \Ket{21_4} - i\Ket{32_4} \right]$& $ \frac{1}{2}\left[\Ket{01_4} + i\Ket{11_4} - \Ket{23_4} -i \Ket{32_4} \right]$ \\
$\Ket{12_4}$ &$ \frac{1}{2}\left[\Ket{02_4} + i\Ket{12_4} - \Ket{22_4} -i \Ket{33_4} \right]$& $ \frac{1}{2}\left[\Ket{02_4} + i\Ket{12_4} - \Ket{20_4} -i \Ket{33_4} \right]$ \\
$\Ket{13_4}$ &$ \frac{1}{2}\left[\Ket{03_4} + i\Ket{13_4}- \Ket{23_4} -i \Ket{30_4} \right]$ &$ \frac{1}{2}\left[\Ket{03_4} + i\Ket{13_4} - \Ket{21_4} -i \Ket{30_4} \right]$\\
$\Ket{20_4}$ & $ \frac{1}{2}\left[\Ket{00_4} - \Ket{10_4} + \Ket{20_4} - \Ket{31_4} \right]$ &$ \frac{1}{2}\left[\Ket{00_4} - \Ket{10_4} + \Ket{22_4} - \Ket{31_4} \right]$ \\
$\Ket{21_4}$ &$ \frac{1}{2}\left[\Ket{01_4} - \Ket{11_4} + \Ket{21_4} - \Ket{32_4} \right]$&  $ \frac{1}{2}\left[\Ket{01_4} - \Ket{11_4} + \Ket{23_4} - \Ket{32_4} \right]$\\
$\Ket{22_4}$ &$ \frac{1}{2}\left[\Ket{02_4} - \Ket{12_4} + \Ket{22_4} - \Ket{33_4} \right]$&$ \frac{1}{2}\left[\Ket{02_4} - \Ket{12_4} + \Ket{20_4} - \Ket{33_4} \right]$   \\
$\Ket{23_4}$ &$ \frac{1}{2}\left[\Ket{03_4} - \Ket{13_4} + \Ket{23_4} - \Ket{30_4} \right]$  &$ \frac{1}{2}\left[\Ket{03_4} - \Ket{13_4} + \Ket{21_4} - \Ket{30_4} \right]$\\
$\Ket{30_4}$ & $ \frac{1}{2}\left[\Ket{00_4} -i \Ket{10_4} - \Ket{20_4} +i \Ket{31_4} \right]$ & $ \frac{1}{2}\left[\Ket{00_4} -i \Ket{10_4} - \Ket{22_4}+i  \Ket{31_4} \right]$\\
$\Ket{31_4}$ &$ \frac{1}{2}\left[\Ket{01_4} -i\Ket{11_4} - \Ket{21_4} +i \Ket{32_4} \right]$&$ \frac{1}{2}\left[\Ket{01_4} -i \Ket{11_4} - \Ket{23_4} +i \Ket{32_4} \right]$   \\
$\Ket{32_4}$ &$ \frac{1}{2}\left[\Ket{02_4} -i \Ket{12_4} - \Ket{22_4} +i \Ket{33_4} \right]$& $ \frac{1}{2}\left[\Ket{02_4} -i\Ket{12_4} - \Ket{20_4} +i \Ket{33_4} \right]$ \\
$\Ket{33_4}$ & $ \frac{1}{2}\left[\Ket{03_4} -i \Ket{13_4} - \Ket{23_4} +i \Ket{30_4} \right]$&$ \frac{1}{2}\left[\Ket{03_4} -i \Ket{13_4} - \Ket{21_4} +i \Ket{30_4} \right]$\\
\hline
\end{tabular}}
\hfill{}
&
\renewcommand\arraystretch{1.0}
{\scriptsize
\hfill{}
\begin{tabular}{|c|c|}
\hline
\multicolumn{2}{|c|}{\textbf{Two-Qudit Operation(s) in Full Entanglement Generator }}\\
\cline{1-2}
\hline
 \textbf{Input} & $\mathbf{A}_{1,3}\times\mathbf{A}_{2,2}\times\mathbf{A}_{3,1}=\mathbf{A}_{(1,2,3),(3,2,1)}$  \\
 \hline
$\Ket{00_4}$ &$\frac{1}{2}\left[\Ket{00_4} + \Ket{13_4} + \Ket{22_4} + \Ket{31_4}\right]$ \\
$\Ket{01_4}$ & $\frac{1}{2}\left[\Ket{01_4} + \Ket{10_4} + \Ket{23_4} + \Ket{32_4}\right]$\\
$\Ket{02_4}$ &$\frac{1}{2}\left[\Ket{02_4} + \Ket{11_4} + \Ket{20_4} + \Ket{33_4}\right]$ \\
$\Ket{03_4}$ & $\frac{1}{2}\left[\Ket{03_4} + \Ket{12_4} + \Ket{21_4} + \Ket{30_4}\right]$\\
$\Ket{10_4}$ & $\frac{1}{2}\left[\Ket{00_4} + i\Ket{13_4} - \Ket{22_4} -i \Ket{31_4}\right]$ \\
$\Ket{11_4}$ &$\frac{1}{2}\left[\Ket{01_4} + i\Ket{10_4} -\Ket{23_4} -i \Ket{32_4}\right]$ \\
$\Ket{12_4}$ &$\frac{1}{2}\left[\Ket{02_4} +i \Ket{11_4} - \Ket{20_4} -i \Ket{33_4}\right]$ \\
$\Ket{13_4}$ &$\frac{1}{2}\left[\Ket{03_4} +i \Ket{12_4} - \Ket{21_4} -i \Ket{30_4}\right]$ \\
$\Ket{20_4}$ & $\frac{1}{2}\left[\Ket{00_4} -\Ket{13_4} + \Ket{22_4} - \Ket{31_4}\right]$   \\
$\Ket{21_4}$ &$\frac{1}{2}\left[\Ket{01_4} - \Ket{10_4} + \Ket{23_4} - \Ket{32_4}\right]$\\
$\Ket{22_4}$ &$\frac{1}{2}\left[\Ket{02_4} - \Ket{11_4} + \Ket{20_4} - \Ket{33_4}\right]$  \\
$\Ket{23_4}$ & $\frac{1}{2}\left[\Ket{03_4} - \Ket{12_4} + \Ket{21_4} - \Ket{30_4}\right]$\\
$\Ket{30_4}$ & $\frac{1}{2}\left[\Ket{00_4} -i\Ket{13_4} - \Ket{22_4} +i \Ket{31_4}\right]$  \\
$\Ket{31_4}$ & $\frac{1}{2}\left[\Ket{01_4} -i \Ket{10_4} - \Ket{23_4} +i \Ket{32_4}\right]$  \\
$\Ket{32_4}$ &$\frac{1}{2}\left[\Ket{02_4} -i \Ket{11_4} - \Ket{20_4} +i \Ket{33_4}\right]$\\
$\Ket{33_4}$ & $\frac{1}{2}\left[\Ket{03_4} -i \Ket{12_4} - \Ket{21_4} +i \Ket{30_4}\right]$\\
\hline
\end{tabular}}
\hfill{}
\label{tb:par_full_ent_data}
\end{tabular}
\end{table*}
\end{center}

\section{Demonstration of Full Entanglement}

The previous section described how partially entangled qudit pair generators can be created with a quantum circuit that includes a radix-$r$ Chrestenson gate and fewer than $r-1$ single-controlled-modulo-add-by-$k$ gates. In accordance with Theorem \ref{thm:max-ent-thm},  a total of appropriate $r-1$ controlled Modulo-add operations are required for full entanglement to occur.

\begin{theorem} \textit{Maximal Entanglement Generator} \label{thm:max-ent-thm}
A maximally entangled radix-$r$ qudit pair can be generated when the the pair
is initialized to a basis state and one of the qudits evolves to a state of maximal superposition through the application of a Chrestenson gate, $\mathbf{C}_r$. 
This superimposed qudit is then applied to the control inputs of $r-1$ controlled modulo-add-by-$k$ gates, $\mathbf{A}_{h,k}$,
and the other non-superimposed qudit is applied to the target.  
Each of the control values, $h$, of the $r-1$ $\mathbf{A}_{h,k}$ gates has a separate
and distinct value from the set $\{0,1,\cdots,(r-1)\}$ and each of the 
modulo-add-by-$k$ target operations, $k$, of the $r-1$ $\mathbf{A}_{h,k}$ gates, takes on
a  separate and distinct value from the set $\{1,\cdots,(r-1)\}$.
\end{theorem}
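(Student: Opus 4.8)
The plan is to track the joint state through the circuit, starting from the representative basis input $\Ket{00_r}$, and to show the output is a sum of exactly $r$ equal-magnitude, perfectly correlated basis terms. First I would apply $\mathbf{C}_r \otimes \mathbf{I}_r$. Using the column structure of $\mathbf{C}_r$ in Eqn.~\ref{eq:gen_Chrestenson}, the zeroth column is all-ones, so $\mathbf{C}_r\Ket{0_r} = \frac{1}{\sqrt r}\sum_{j=0}^{r-1}\Ket{j_r}$, and the pair becomes $\frac{1}{\sqrt r}\sum_{j=0}^{r-1}\Ket{j_r}\otimes\Ket{0_r}$: a control register in maximal superposition (Definition~\ref{def:max-super}) tensored with the target held at $\Ket{0_r}$.

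Next I would analyze the composite action of the $r-1$ gates $\mathbf{A}_{h,k}$. The key structural observation comes from Eqn.~\ref{eq:gen-cont-mod-add}: each $\mathbf{A}_{h,k}$ acts nontrivially only on the control block $\Ket{h_r}$, applying $\mathbf{M}_k$ to the target there and $\mathbf{I}_r$ on every other control block. Hence gates with distinct control values $h$ act on mutually orthogonal control subspaces; they commute, so the circuit order is irrelevant, and their product is the block operator that applies $\mathbf{M}_{k_h}$ to the target on each block $h$ in the chosen control set $H$ while leaving the single missing block $h^\ast\in\{0,\dots,r-1\}\setminus H$ untouched. Because the control is never modified, within each branch of the superposition at most one gate fires. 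Applying the product to $\frac{1}{\sqrt r}\sum_j\Ket{j_r}\otimes\Ket{0_r}$ and using $\mathbf{M}_k\Ket{0_r}=\Ket{(k)_r}$ sends the $j=h^\ast$ term to $\Ket{h^\ast_r}\otimes\Ket{0_r}$ and every $j=h\in H$ term to $\Ket{h_r}\otimes\Ket{(k_h)_r}$.

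The crux is then purely combinatorial. Since the hypotheses force the $r-1$ distinct values $k_h$ to exhaust, hence permute, the set $\{1,\dots,r-1\}$, while the untouched block contributes target value $0$, the map $t$ sending each control value $j$ to its branch's target value is a bijection of $\{0,1,\dots,r-1\}$ onto itself. The output is therefore $\frac{1}{\sqrt r}\sum_{j=0}^{r-1}\Ket{j_r}\otimes\Ket{t(j)_r}$: a superposition of exactly $r$ basis terms each of amplitude $\frac{1}{\sqrt r}$, so every observation outcome is equally likely with probability $\frac1r$. Because $t$ is a bijection, measuring the control in any $\Ket{j_r}$ forces the target into the unique $\Ket{t(j)_r}$, and measuring the target in any $\Ket{m_r}$ forces the control into $\Ket{t^{-1}(m)_r}$; no term is factorable, which is exactly the maximal-entanglement criterion (equal outcome likelihood together with a collapse of one qudit upon measurement of the other).

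A final short step would confirm the result survives an arbitrary basis-state initialization $\Ket{ab_r}$: the Chrestenson then introduces unit-modulus phases $w_k^a$, which leave all amplitude magnitudes at $\frac{1}{\sqrt r}$, and a nonzero target offset $b$ merely composes $t$ with the fixed modular shift $m\mapsto(m+b)\bmod r$, preserving both the equal magnitudes and the bijection. I expect the main obstacle to be stating the orthogonal-block commutation and the ``exactly one gate fires per branch'' property cleanly enough that the bijection count is fully rigorous, rather than merely exhibited by the $r=4$ entries of Table~\ref{tb:par_full_ent_data}.
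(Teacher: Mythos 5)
Your proposal is correct, and it reaches the theorem by a genuinely different route than the paper. The paper's own proof is an induction anchored in the radix-4 computations of Eqns.~\ref{eq:T_5a} and~\ref{eq:T_5b}: one $\mathbf{A}_{h,k}$ gate leaves one fully entangled branch and $r-1$ factorable ones, a second gate (with fresh $h$ and $k$) leaves $r-2$ of each, and this pattern is extrapolated to conclude that $r-1$ such gates entangle all $r$ branches; the induction step is asserted from the examples rather than established for general $r$. Your argument instead computes the output in closed form: the block-diagonal structure of Eqn.~\ref{eq:gen-cont-mod-add} shows that gates with distinct $h$ act on orthogonal control blocks and hence commute, their product applies $\mathbf{M}_{k_h}$ on each chosen block and the identity on the single omitted block, and the hypothesis that the $k$ values exhaust $\{1,\cdots,(r-1)\}$ makes the control-to-target map $t$ a bijection, giving the state $\frac{1}{\sqrt{r}}\sum_{j}\Ket{j_r}\otimes\Ket{t(j)_r}$. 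This buys three things the paper leaves implicit: it actually proves the induction step the paper only illustrates (a fresh $h$ fires on exactly one previously factorable branch, and a fresh $k$ guarantees the new target value collides with no existing one); it covers arbitrary basis-state inputs via unit-modulus phases and the shift $m\mapsto(m+b)\bmod r$, which the paper handles only by tabulating the $r=4$ cases in Table~\ref{tb:par_full_ent_data}; and it establishes the commutativity/order-irrelevance claim that the paper invokes separately when counting generators in Eqn.~\ref{eq:total_unique_max_ent_gen}. The paper's approach, in exchange, is shorter and stays tied to concrete, checkable examples. One small wording caution for your write-up: say that no subset of the $r$ output terms shares a common single-qudit factor (equivalently, the sum cannot be written as a tensor product of single-qudit states), rather than that ``no term is factorable''---each individual term $\Ket{j_r}\otimes\Ket{t(j)_r}$ is of course a product state; it is only the superposition as a whole that is unfactorable.
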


\begin{proof}
Consider Eqn. \ref{eq:T_5a} wherein a single controlled modulo-add-by-$k$ gate, $\mathbf{A}_{3,1}$ results
in a single fully entangled state and $r-1$ partially entangled states.  Likewise, in Eqn. \ref{eq:T_5b} another
controlled modulo-add-by-$k$ gate, $\mathbf{A}_{2,2}$ is applied resulting in $r-2$ fully entangled states and
$r-2$ partially entangled states.  Hence, by induction it is observed that the application of $r-1$
controlled modulo-add-by-$k$ gates,  where all control values $h$ are unique from the set $\{0,1,\cdots,(r-1)\}$
and all modulo-addition constants $k$ are unique from the set $\{1,\cdots,(r-1)\}$, results in $r$ fully-entangled qudit basis states. \hfill $\Box$
\end{proof}

In the case of the example radix-4 QIP system, three distinct $\mathbf{A}_{h,k}$ gates  are required to generate full entanglement where each set of $h$ and $k$ values must contain unique integers. An illustration of a radix-4 full entanglement generator is given in Fig.~\ref{fig:full-ent-gen}. In this circuit, the $\mathbf{C}_4$ gate creates maximal control qudit superposition and the two-qudit operation  $\mathbf{A}_{1,3}\times\mathbf{A}_{2,2}\times\mathbf{A}_{3,1}=\mathbf{A}_{(1,2,3),(3,2,1)}$ generates maximal entanglement among the qudit pair. When the initialized qudit state $\Ket{00_4}$ is evolved via the circuit in Fig.~\ref{fig:full-ent-gen},  a fully entangled qudit pair results.

\begin{equation} \label{eq:T_6}
\scriptsize{
\begin{split}
\mathbf{T}_{Fig.~\ref{fig:full-ent-gen}}\Ket{00_4}&= \mathbf{A}_{1,3}\times\mathbf{A}_{2,2}\times\mathbf{A}_{3,1} \times (\mathbf{C}_4 \otimes \mathbf{I}_4)\Ket{00_4} \\
&=\mathbf{A}_{(1,2,3),(3,2,1)}\times (\mathbf{C}_4 \otimes \mathbf{I}_4)\Ket{00_4} \\
&= \frac{1}{2}\left[\Ket{00_4} + \Ket{13_4} + \Ket{22_4} + \Ket{31_4} \right]. \\
\end{split}}
\end{equation}

In Eqn.~\ref{eq:T_6}, all of the output value basis states have unique values for both qudits. Thus, the qudits are fully entangled. When other initial basis states of qudit pairs evolve through the example entanglement generator of Fig. \ref{fig:full-ent-gen}, alternative fully entangled qudit states result. These different fully entangled states resulting from the Fig.~\ref{fig:full-ent-gen} circuit are given in rightmost set of fully entangled results in Table~\ref{tb:par_full_ent_data}.  We note that this set of states is analogous to the radix-2 Bell states when the radix is extended to $r=4$.  Therefore, one result of this paper is the generalization of the Bell state generator to a radix-$r$ system.

The qudit entanglement generator pictured in Fig.~\ref{fig:full-ent-gen} is not the only structure that creates maximally entangled radix-4 qudits. A $\mathbf{C}_4$ gate followed by any group of three different $\mathbf{A}_{h,k}$ gates, each with unique values for both $h$ and $k$, will create entangled states from qudits originally in a basis state. Since there are $r=4$ options for the control level, $h$, the three $\mathbf{A}_{h,k}$ gates needed for the full entanglement generator have a total of four different combinations for the implemented control values: $(\Ket{1_4}, \Ket{2_4}, \Ket{3_4})$, $(\Ket{0_4}, \Ket{1_4}, \Ket{2_4})$, $(\Ket{0_4}, \Ket{1_4}, \Ket{3_4})$, and $(\Ket{0_4}, \Ket{2_4}, \Ket{3_4})$. Considering the three $k$ values for the $r=4$ modulo-add-by-$k$ operations, there are six permutations for each control group. This gives a total of 24 radix-4 full entanglement generators. The order in which the $\mathbf{A}_{(h,k)}$ gates appear after the $\mathbf{C}_4$ gate is irrelevant, thus there are six orders for each of the 24 different full entanglement generators, resulting in 144 different and distinct $r=4$ full entanglement generators.  
In general, there are 

\begin{equation} \label{eq:total_max_ent_gen}
\prod_{i=2}^{r} (i^2-i)
\end{equation}

\noindent different full entanglement circuits of the form described here for a pair of radix-$r$ qudits. Since combining $\mathbf{A}_{(h,k)}$ gates is commutative, $\mathbf{A}_{(h_1,k_1)} \times \mathbf{A}_{(h_2,k_2)} =\mathbf{A}_{(h_2,k_2)} \times \mathbf{A}_{(h_2,k_2)} $, the function

\begin{equation} \label{eq:total_unique_max_ent_gen}
\prod_{i=2}^{r} \frac{(i^2-i)}{(i-1)} = \prod_{i=2}^{r} \frac{i(i-1)}{(i-1)} =  \prod_{i=2}^{r} i = r!
\end{equation}

\noindent is used to determine the number of maximal entanglement circuit configurations that produce unique transfer functions.

\begin{figure}[!t]
\includegraphics[height=0.75in]{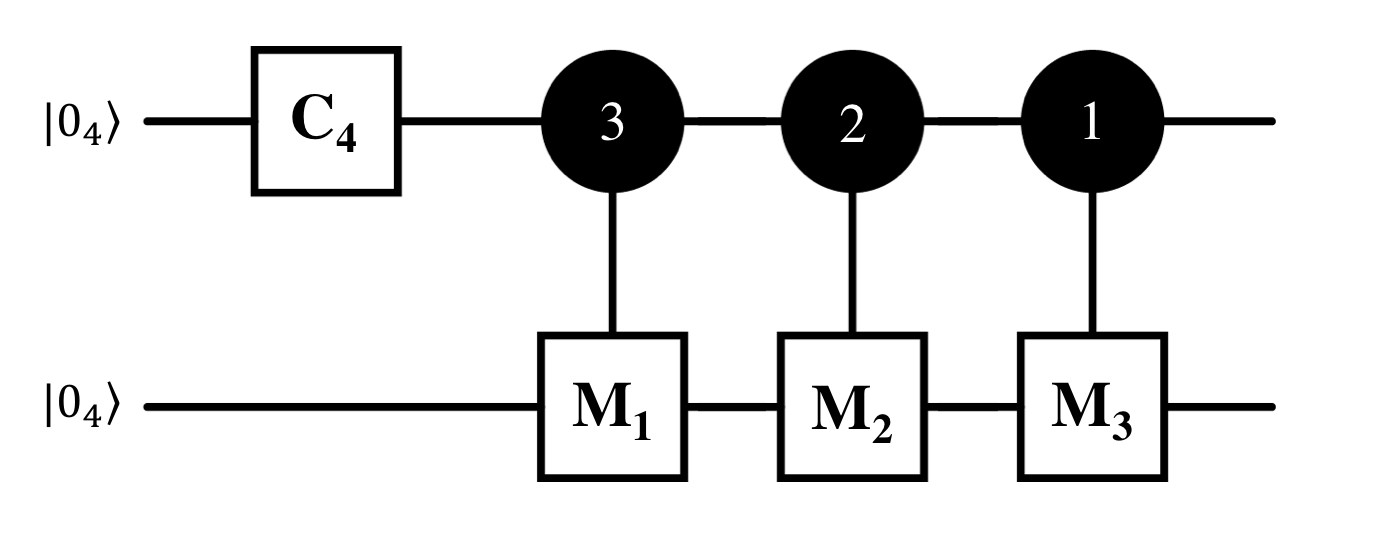}
\centering
\caption{Radix-4 full entanglement generator}
\label{fig:full-ent-gen}
\end{figure}

\section{Conclusion}
We have introduced and defined the concept of partial entanglement for higher-dimensional QIS systems.
In binary QIS, entanglement is either present or not. There is no concept of partial entanglement.
In a higher-dimensioned QIS system, entanglement can be present in varying degrees. 
We have furthermore developed
entanglement generators for higher-dimensional QIS systems.  These generators, when they are configured to produce maximal entanglement, are directly analogous to the well known Bell state generators in binary qubit-based QIS.  We have also shown how entanglement generators can be configured to produce partially entangled states. The partial and maximal entanglement state generators developed and described here use single-qudit Chrestenson operators for maximal superposition and controlled modulo-add-by-$k$ operators for entangling.  
These results will enable algorithms in binary QIS to be generalized and extended to higher-dimensional qudit-based systems.

\end{document}